\newtheorem{theorem}{Theorem}
\newtheorem{definition}{Definition}
\newtheorem{proposition}{Proposition}
\newtheorem{lemma}{Lemma}
\newcommand{\<}{\prec}
\newcommand{\cle}{\preceq}
\renewcommand{\>}{\succ}
\newcommand{\al}{\alpha}
\newcommand{\la}{\lambda}
\newcommand{\NN}{\mathbb N}
\newcommand{\RR}{\mathbb R}
\newcommand{\mc}[1]{\mathcal #1}
\newcommand{\MMMM}{\mc M}
\newcommand{\PPPP}{\mc P}
\newcommand{\Ra}{\ \Rightarrow \ }
\newcommand{\lemref}[1]{Lemma~\ref{lem:#1}}
\newcommand{\propref}[1]{Proposition~\ref{prop:#1}}
\newcommand{\eqrefs}[2]{Equations~\eqref{eq:#1} and~\eqref{eq:#2}}
\newcommand{\menge}[2]{\left\{{#1}\ \big|\ {#2}\right\}}
\newcommand{\ie}{i.\,e.\ }
\newcommand{\eg}{e.\,g.\ }
\DeclareMathOperator{\supp}{supp}
\begin{document}

\title{Oracle Tractability of Skew Bisubmodular Functions}

\author{Anna Huber and Andrei Krokhin\\
Durham University, UK}
\date{}
\maketitle 

\begin{abstract}
In this paper we consider skew bisubmodular functions as introduced in
\cite{ours:soda}.
We construct a convex extension of a skew bisubmodular function which we call Lov\'asz extension in correspondence to the submodular case. We use this extension to show that skew bisubmodular functions given by an oracle can be minimised in polynomial time.
\end{abstract}

\section{Introduction}
A key task in combinatorial optimisation is the minimisation of discrete functions. Important examples are {\em submodular} functions, see \eg \cite{Fujishige:book,Lovasz,McCormick,Schrijver},
and {\em bisubmodular} functions, see \eg  \cite{BouchetCunningham,Fujishige:book,McCormick,Qi}. 
These functions can be viewed as (special) functions from $D^n$ to $\RR$ where $D$ is a 2-element set for the submodular case and a 3-element set for the bisubmodular case. Fix a finite set $D$. One says that a class $\mathcal{C}$ of functions from $D^n$ to $\mathbb{Q}$ is {\em oracle-tractable} if there is an algorithm which, given a function $f\in C$ represented by a value-giving oracle, finds the minimiser of $f$ in time polynomial time in $n$
(the arity of $f$).
The oracle tractability of submodular and bisubmodular functions has been shown in \cite{Groetschel,Lovasz} and \cite{Qi} respectively, with many subsequent improvements (see \eg \cite{McCormick}).
Results about oracle tractability for other classes of discrete functions can be found in \cite{KrokhinLarose,Kuivinen}.

Submodular and bisubmodular functions play an important role for classifying the complexity
of optimisation problems known as {\em valued constraint satisfaction problems} (VCSPs).
These problems amount to minimising certain discrete functions represented as sums of bounded-arity functions.
Submodularity characterises tractable VCSPs on a two-element domain \cite{Cohen06:soft}. In \cite{ours:soda} a generalisation of bisubmodularity, {\em skew  bisubmodularity}, is introduced and used to classify the complexity of VCSPs on a three-element domain.
The tractability of skew bisubmodular function minimisation in the VCSP setting (i.e. represented as sums of bounded-arity skew bisubmodular functions) follows from \cite{Thapper12:power}, but the question whether skew bisubmodular functions are also tractable in the oracle model has been left open in \cite{ours:soda}. In this paper we construct a convex extension of a skew bisubmodular function, called Lov\'asz extension in correspondence to the submodular case \cite{Lovasz}, and show the oracle tractability of skew bisubmodular functions.

Very closely related results have recently appeared in \cite{Fujishige13}, where the authors acknowledge this work.

\subsection{Notation and Definition}
Skew bisubmodularity, also known as {\em $\alpha$-bisubmodularity}, is defined for functions $f:D^n\rightarrow \RR$ where $|D|=3$. In
\cite{ours:soda}, the elements of $D$ are denoted by $-1,0,1$. In this paper, we will fix $\al \in (0,1]$ throughout and, for convenience of notation, denote the elements of $D$ by $-\al,0,1$, replacing the name $-1$ by $-\alpha$. Obviously, 
 there is a direct correspondence between functions over $\{-1,0,1\}$ and functions over $\{-\al,0,1\}$. The definition of $\alpha$-bisubmodularity as in \cite{ours:soda} is then as follows.
Let $n \in \NN$. We write $[n] := \{1, \dots, n\}$.

Define the order $\<$ on $D$ through $0\<1$,  $0\<-\alpha$ and $1$ and $-\alpha$ being incomparable.
We also denote the corresponding component-wise order on $D^n$ by $\<$.

Define the binary operation $\wedge_0$ on $D$ as follows.
 \[\begin{array}{l}
1\wedge_0-\al = -\al\wedge_0 1=0; \\
x\wedge_0 y=\min(x,y) \mbox{ with respect to the above order if } \{x,y\}\ne \{-\al,1\}.
   \end{array}
\]
For $a\in D$, define the binary operation $\vee_a$ as follows:
 \[\begin{array}{l}
     1\vee_a-\al = -\al\vee_a 1=a; \\
x\vee_a y=\max(x,y) \mbox{ with respect to the above order if } \{x,y\}\ne \{-\al,1\}.
   \end{array}
\]
We also denote the corresponding component-wise operations on $D^n$ by $\wedge_0$ and $\vee_a$ respectively.

\begin{definition}\label{def:alpha}
A function $f:D^n\rightarrow \RR$ is called {\em $\alpha$-bisubmodular
} if, for all $\mathbf{a},\mathbf{b} \in D^n$,
\begin{equation}\label{alpha}
f({\bf a}\wedge_0 {\bf
b})+\alpha\cdot f({\bf a}\vee_0 {\bf b}) + (1-\alpha)\cdot f({\bf a}\vee_1 {\bf b}) \leq f({\bf a})+f({\bf b}).
\end{equation}
\end{definition}

The above inequality defines submodular functions if we restrict $D$ to $\{0,1\}$ (i.e. ignore $-\al$) and bisubmodular functions if $\al=1$.

\subsection{Result}
\begin{theorem}\label{thm:main}
There exists an algorithm that finds a minimum of any $\alpha$-bisubmodular function $f:D^n\rightarrow \mathbb{Q}$ in time polynomial in $n$ if $f$ is given by an oracle.
\end{theorem}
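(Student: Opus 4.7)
My plan is to follow the classical template of Gr\"otschel, Lov\'asz and Schrijver \cite{Groetschel,Lovasz}: construct a piecewise-linear convex extension $\hat f$ of $f$ on a continuous polytope $P$ whose extreme points correspond to the elements of $D^n$, verify that minimising $\hat f$ on $P$ is equivalent to minimising $f$ on $D^n$, and then apply the ellipsoid method for convex minimisation over $P$.

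For the polytope I would take $P := [-\al,1]^n$, so that $D^n\subseteq P$ via the natural embedding. To define $\hat f$, given $\mathbf{y}\in P$, first normalise each coordinate by setting $\lambda_i := y_i$ if $y_i\geq 0$ and $\lambda_i := -y_i/\al$ if $y_i<0$, so that $\lambda_i\in[0,1]$. Let $\pi$ be a permutation of $[n]$ with $\lambda_{\pi(1)}\geq\cdots\geq\lambda_{\pi(n)}$, and let $\mathbf{a}_0,\mathbf{a}_1,\ldots,\mathbf{a}_n$ be the chain in $(D^n,\<)$ with $\mathbf{a}_0=\mathbf{0}$, where $\mathbf{a}_k$ is obtained from $\mathbf{a}_{k-1}$ by setting the $\pi(k)$-th coordinate to $1$ if $y_{\pi(k)}\geq 0$ and to $-\al$ if $y_{\pi(k)}<0$. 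With the convention $\lambda_{\pi(n+1)}=0$, define
\[
\hat f(\mathbf{y}) \;:=\; (1-\lambda_{\pi(1)})\,f(\mathbf{a}_0)+\sum_{k=1}^{n}(\lambda_{\pi(k)}-\lambda_{\pi(k+1)})\,f(\mathbf{a}_k).
\]
By construction $\hat f$ is continuous and piecewise linear on $P$, evaluable with $n+1$ oracle calls, and agrees with $f$ on $D^n$.

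The technical core of the argument, and the step I expect to be the main obstacle, is showing that $\hat f$ is convex on $P$. Since $\hat f$ is piecewise linear, its convexity reduces to a local condition across each facet separating two adjacent linearity chambers. These chambers are indexed by the sign pattern of $\mathbf{y}$ and by the ordering $\pi$, so the facets come in two families: adjacent transpositions in $\pi$ (when two $\lambda$-values coincide) and sign flips (when some $y_i$ crosses $0$). A direct computation of the directional derivatives of $\hat f$ on both sides of such a facet reduces the required convexity inequality to an inequality between four values of $f$ at explicit points of $D^n$, and in each case this inequality turns out to be an instance of \eqref{alpha} applied to suitable $\mathbf{a},\mathbf{b}\in D^n$. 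The coefficient $\al$ of \eqref{alpha} enters most visibly in the sign-flip case, where the pair $\mathbf{a},\mathbf{b}$ has $\{-\al,1\}$ as its $i$-th coordinates so that $\mathbf{a}\vee_0\mathbf{b}$ and $\mathbf{a}\vee_1\mathbf{b}$ genuinely differ, and the factor $1/\al$ used to normalise negative $y_i$ in the definition of $\hat f$ is exactly what makes the coefficients in the convexity test match those of \eqref{alpha}. Identifying the correct normalisation and matching coefficients is the delicate step that dictates the shape of the extension.

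With convexity established, the remaining steps are standard. Because $\hat f(\mathbf{y})$ is a convex combination of the values $f(\mathbf{a}_0),\ldots,f(\mathbf{a}_n)$, at least one $\mathbf{a}_k$ satisfies $f(\mathbf{a}_k)\leq\hat f(\mathbf{y})$, giving $\min_{\mathbf{x}\in D^n}f(\mathbf{x})=\min_{\mathbf{y}\in P}\hat f(\mathbf{y})$. Since $P$ is a box and $\hat f$ is convex and evaluable by $O(n)$ oracle calls, the ellipsoid-based convex minimisation of Gr\"otschel--Lov\'asz--Schrijver minimises $\hat f$ over $P$ in time polynomial in $n$ and in the encoding length of $\al$ and of the rationals in the image of $f$. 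Because those rationals have pairwise differences bounded below by a polynomially small quantity, computing an approximate minimiser $\mathbf{y}^*$ to sufficient precision and returning the best $\mathbf{a}_k$ along its associated chain yields an exact minimiser of $f$ in $D^n$.
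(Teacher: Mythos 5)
Your proposal is correct in outline and ends at the same place as the paper (ellipsoid-based convex minimisation over $[-\al,1]^n$, citing \cite{Groetschel,Lovasz}), but it reaches the key convexity fact by a genuinely different route. First, your $\hat f$ is the same function as the paper's $f^L$: the distribution putting mass $1-\lambda_{\pi(1)}$ on $\mathbf{a}_0=\mathbf{0}$ and $\lambda_{\pi(k)}-\lambda_{\pi(k+1)}$ on $\mathbf{a}_k$ is supported on a chain in $(D^n,\prec)$ and has marginals $\mathbf{y}$ (your $1/\al$ normalisation is exactly what makes the negative coordinates come out right), so by the uniqueness part of \lemref{lambda} it coincides with $\la_{\bf y}$; your sorted ``greedy'' formula is a closed form of the paper's recursive construction. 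Where you diverge is the proof of convexity: the paper proves \lemref{main} by showing $f^L=f^-$, the convex closure of \propref{convex}, via an uncrossing argument --- among the distributions attaining $f^-({\bf x})$ it picks one maximising the potential $\sum_{\bf a}\mu({\bf a})z^2({\bf a})$ and shows that two incomparable support elements could be uncrossed using \eqref{alpha} while strictly increasing the potential --- which buys, beyond convexity, the identity $f^L=f^-$ and the converse direction (convexity of $f^L$ forces $\al$-bisubmodularity). You instead exploit piecewise linearity and check slope monotonicity across the codimension-one walls of the chamber decomposition, which is more elementary and only needs the direction required for the theorem. Your two families of walls are the right ones, and the computations you leave implicit do come out as claimed: an adjacent-transposition wall reduces to $f(\mathbf{a})+f(\mathbf{b})\ge f(\mathbf{a}\wedge_0\mathbf{b})+f(\mathbf{a}\vee_0\mathbf{b})$ for a pair with no $\{1,-\al\}$ conflict, i.e.\ the degenerate case $\mathbf{a}\vee_0\mathbf{b}=\mathbf{a}\vee_1\mathbf{b}$ of \eqref{alpha}, while a sign-flip wall reduces to $(1+\al)f(\mathbf{c})\le \al f(\mathbf{a})+f(\mathbf{b})$, where $\mathbf{a},\mathbf{b}$ agree with $\mathbf{c}$ except that the flipped coordinate is $1$ in $\mathbf{a}$, $-\al$ in $\mathbf{b}$ and $0$ in $\mathbf{c}$, i.e.\ the case $\mathbf{a}\wedge_0\mathbf{b}=\mathbf{a}\vee_0\mathbf{b}=\mathbf{c}$, $\mathbf{a}\vee_1\mathbf{b}=\mathbf{a}$ of \eqref{alpha}. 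Since this wall-crossing verification is the entire content of the convexity claim, a complete write-up must actually carry it out (together with the routine checks that $\hat f$ is well defined under ties and that slope monotonicity across walls of a continuous piecewise-linear function implies global convexity), but nothing in the plan fails. Your final rounding step is also sound --- $\hat f(\mathbf{y})$ is a convex combination of the chain values, so any sufficiently good approximate minimiser yields an exact minimiser among $\mathbf{a}_0,\dots,\mathbf{a}_n$ --- though the precision guarantee should be stated in terms of the bit length of the values of $f$ rather than a ``polynomially small'' gap.
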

\begin{proof}
In the remainder of the paper we will construct for any $\alpha$-bisubmodular function $f:D^n\rightarrow \mathbb{Q}$ a convex extension $f^L:[-\al,1]^n\rightarrow \RR$ which takes its minimal value on $D^n$ and which can be efficiently computed on every rational vector in $[-\al,1]^n$. The theorem then follows from convex optimisation techniques, in the same way that sub- and bisubmodular minimisation are achieved through convex optimisation, see \cite{Lovasz} and \cite{Qi} respectively.
\end{proof}

\section{Lov\'asz Extension for Skew Bisubmodular Functions}

For ${\bf x} \in [-\al,1]^n$ let $\PPPP({\bf x})$ be the set of all probability distributions on $D^n$ with marginals ${\bf x}$, \ie
 $$\PPPP({\bf x}) := \menge{\la: D^n\rightarrow [0,1]}{\sum_{{\bf a} \in D^n}\la({\bf a}) = 1, \sum_{{\bf a} \in D^n}\la({\bf a}) {\bf a} = {\bf x}}$$

\begin{definition}[Lov\'asz Extension]
For a function $f:D^n\rightarrow \RR$ define the {\em Lov\'asz Extension}
$f^L:[-\al,1]^n\rightarrow \RR$ through
$$f^L({\bf x}) := \sum_{{\bf a} \in D^n}\la_{{\bf x}}({\bf a})f({\bf a}),$$
where $\la_{{\bf x}}$ is the unique element of $\PPPP({\bf x})$ such that its support forms a chain in $D^n$ with respect to the order $\<$. 
(The existence of this element is proved below in \lemref{lambda}).
\end{definition}

Note that, for ${\bf a} \in D^n$, one has $\la_{{\bf a}}({\bf a}) = 1$ and thus $f^L({\bf a}) = f({\bf a})$, \ie $f^L$ is indeed an extension of $f$.
It also follows directly from the definition that
$$\min \menge{f({\bf a})}{{\bf a} \in D^n} = \min \menge{f^L({\bf x})}{{\bf x} \in [-\al,1]^n}.$$

The restriction of $f^L$ to $[0,1]^n$ is the ordinary Lov\'asz extension for $f|_{\{0,1\}^n}$, as in \cite{Lovasz}.
In the case $\al = 1$, the function $f^L$ is the Lov\'asz extension for bisubmodular functions as in \cite{Qi}.

\begin{lemma}\label{lem:lambda}
For every ${\bf x} \in [-\al,1]^n$, there is a unique element $\la_{\bf x}$ of $\PPPP({\bf x})$ such that its support forms a chain in $D^n$ with respect to the order $\<$.
\end{lemma}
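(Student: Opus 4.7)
The plan is to exhibit $\lambda_{\mathbf{x}}$ explicitly and then argue that its chain structure is forced by the marginal conditions. The key observation is that the only covering relations in $(D,\prec)$ are $0 \prec 1$ and $0 \prec -\alpha$, so along any chain $\mathbf{a}_0 \prec \cdots \prec \mathbf{a}_k$ in $D^n$ each coordinate sequence $((\mathbf{a}_r)_i)_r$ has the form $0,\ldots,0,w_i,\ldots,w_i$ for some $w_i \in D$ (it jumps at most once, and never between $-\alpha$ and $1$). For each $i \in [n]$ I attach to $x_i$ a direction $v_i \in D$ and a weight $p_i \in [0,1]$: set $(v_i,p_i) := (1,x_i)$ if $x_i > 0$, $(-\alpha,-x_i/\alpha)$ if $x_i < 0$, and $(0,0)$ if $x_i = 0$.

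For existence, let $q_1 > q_2 > \cdots > q_m$ be the distinct positive values appearing among the $p_i$, and put $q_0 := 1$ and $q_{m+1} := 0$. For $0 \le r \le m$ let $S_r := \{i \in [n] : p_i \ge q_r\}$, so $S_0 \subseteq S_1 \subsetneq \cdots \subsetneq S_m$ with $S_0 = \emptyset$ unless $q_1 = 1$. Define $\mathbf{a}_r := \sum_{i \in S_r} v_i \mathbf{e}_i \in D^n$ and assign weight $\lambda(\mathbf{a}_r) := q_r - q_{r+1}$, discarding any $\mathbf{a}_r$ that receives weight $0$. By the coordinate observation the resulting support is a chain in $D^n$; the weights telescope to $q_0 - q_{m+1} = 1$; and the marginal in coordinate $i$ evaluates to $v_i \sum_{r:\, p_i \ge q_r}(q_r - q_{r+1}) = v_i \cdot p_i = x_i$.

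For uniqueness, I would let $\mu \in \mathcal{P}(\mathbf{x})$ have chain support $\mathbf{b}_0 \prec \cdots \prec \mathbf{b}_\ell$ with weights $\nu_0, \ldots, \nu_\ell > 0$. By the coordinate observation each $i$ has a well-defined $w_i \in D$ and entry time $t_i \in \{0,\ldots,\ell\} \cup \{\infty\}$, and the $i$-th marginal reads $w_i \sigma_{t_i} = x_i$ where $\sigma_j := \sum_{k \ge j} \nu_k$ and $\sigma_\infty := 0$. Since $1 = \sigma_0 > \sigma_1 > \cdots > \sigma_\ell > 0$ strictly, the sign of $w_i$ must match that of $x_i$, forcing $w_i = v_i$, with $t_i = \infty$ precisely when $v_i = 0$. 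Then $\sigma_{t_i} = p_i$ for every $i$ with $v_i \ne 0$, and since each step $j \in \{1,\ldots,\ell\}$ must add at least one coordinate, every $\sigma_j$ ($j \ge 1$) equals some such $p_i$. Hence $\{\sigma_0,\ldots,\sigma_\ell\} = \{1\} \cup \{p_i : v_i \ne 0\}$, which, together with the strict monotonicity of $\sigma$, pins down $\ell$, every $\sigma_j$ (and hence every $\nu_j$), and every level set $\{i : t_i = j\}$ (and hence every $\mathbf{b}_j$). Thus $\mu$ coincides with the distribution built above.

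The main subtlety will be the bookkeeping for ties among the $p_i$ (coordinates with equal $p_i$ must enter the chain at the same step) and for coordinates with $x_i = 0$ (which must remain at $0$ throughout); once these are handled, the chain is completely encoded by the strictly decreasing survival function $\sigma$, which the marginal conditions in turn determine.
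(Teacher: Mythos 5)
Your proof is correct, and the distribution you construct is necessarily the paper's $\lambda_{\mathbf{x}}$; the difference is in how you get there. For existence, your threshold/level-set description (directions $v_i$, levels $p_i$, thresholds $q_0=1>q_1>\dots>q_m$, nested sets $S_r$) is the closed form of the paper's recursive construction, which at each step peels off the smallest remaining level $\min\left\{\min_{j\in N_i}\left(-\tfrac{x_{ij}}{\al}\right),\ \min_{j\in P_i}x_{ij}\right\}$ and therefore generates the same chain of sign-pattern vectors, just top-down instead of read off from the sorted values $\{1\}\cup\{p_i\}$; so this part is essentially the same argument in different clothing. Where you genuinely diverge is uniqueness: the paper fixes its constructed chain $(\mathbf{u}_i)$ and matches an arbitrary chain-supported $\mu$ against it by induction, with two explicit contradiction computations (the cases $u_{mj'}=1$ and $u_{mj'}=-\al$), whereas you show that chain support plus the marginal equations determine the distribution intrinsically: each coordinate gives $w_i\sigma_{t_i}=x_i$, the strictly decreasing survival function $\sigma$ must run through exactly $\{1\}\cup\{p_i : v_i\neq 0\}$, and this pins down $\ell$, the weights $\nu_j$, the entry times and hence the support. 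This canonical-form argument is cleaner (no case analysis, and it proves uniqueness without privileging one solution), while the paper's recursive formulation has the practical advantage of doubling as the greedy procedure one actually runs to evaluate $f^L$ at a rational point, which is what the oracle-tractability argument needs. In a final write-up you should make explicit the two small facts you lean on: $p_i\le 1$ for all $i$ (so $q_0=1$ really is the top threshold), which follows from $\mathbf{x}\in[-\al,1]^n$, and the observation that $\sigma_j>0$ for all $j\le\ell$, which is what forces $t_i=\infty$ when $x_i=0$ and forces the sign match $w_i=v_i$ otherwise.
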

\begin{proof}
Let ${\bf x} \in [-\al,1]^n$ and write ${\bf x} = (x_1, \dots, x_n)$.\\
{\bf Construction:}
We will construct an element $\la_{\bf x} \in \RR^{D^n}$ and show that it has the required properties. To this aim we will  recursively construct two sequences, $({\bf u_i})_{i \in \NN}$ in $D^n$ and $({\bf x_i})_{i \in \NN}$ in $[-\al,1]^n$. For every $i \in \NN$ we write ${\bf u_i} = (u_{i1}, \dots, u_{in})$ and ${\bf x_i} = (x_{i1}, \dots, x_{in})$.

Let ${\bf x_1} := {\bf x}.$
Assuming that ${\bf x_i}$ is already constructed for some $i \in \NN$, we will construct ${\bf u_i}$ and ${\bf x_{i+1}}$ as follows.

Denote
$N_i := \menge{j \in [n]}{x_{ij}<0}$,
$Z_i := \menge{j \in [n]}{x_{ij}=0}$, and
$P_i := \menge{j \in [n]}{x_{ij}>0}$.
Let
$$u_{ij} :=
      \begin{cases}
       -\al &\mbox{for  }\ j \in N_i\\
       0    &\mbox{for  }\ j \in Z_i\\
       1    &\mbox{for  }\ j \in P_i , \\
      \end{cases}$$

$$\la_{\bf x}({\bf u_i}) :=
      \begin{cases}
 \min \left\{ \min \menge{-\frac{x_{ij}}{\al}}{j \in N_i}  ,  \min \menge{x_{ij}}{j \in P_i} \right\} &\mbox{if  }\ \bf u_i \neq 0\\
        1 -\la_{\bf x}({\bf u_1}) - \dots -\la_{\bf x}({\bf u_{i-1}}) &\mbox{if  }\ \bf u_i = 0
      \end{cases}$$
and
\begin{equation}\label{eq:3}
{\bf x_{i+1}} := {\bf x_i} - \la_{\bf x}({\bf u_i}){\bf u_i}.
\end{equation}
From this construction we have for every $j \in [n]$ that
\begin{align*}
&u_{ij}=0  &&&\Ra& x_{i+1,j}=0&\Ra& u_{i+1,j}=0\\
&u_{ij}=1 &\Ra& \la_{\bf x}({\bf u_i}) \le x_{ij} &\Ra& x_{i+1,j}\ge 0 &\Ra& u_{i+1,j} \in \{0, 1\}\\
&u_{ij}=-\al &\Ra& \la_{\bf x}({\bf u_i}) \le -\tfrac{x_{ij}}{\al} &\Ra& x_{i+1,j}\le 0 &\Ra& u_{i+1,j} \in \{0, -\al\},
\end{align*}
so $u_{i+1,j} \cle u_{ij}$ and thus ${\bf u_{i+1}} \cle {\bf u_i}$. Furthermore, if $\bf u_i \neq 0$ and $m \in [n]$ is such that either
$$m \in N_i\ \ \mbox{and} \ \ -\tfrac{x_{im}}{\al} = \min \menge{-\tfrac{x_{ij}}{\al}}{j \in N_i}  =\la_{\bf x}({\bf u_i})$$
$$\mbox{or}\ \ \ m \in P_i\ \ \mbox{and} \ \ x_{im} = \min \menge{x_{ij}}{j \in P_i} =\la_{\bf x}({\bf u_i}),$$
then $x_{i+1,m}=0$ and thus $u_{i+1,m}=0$, whereas $u_{im} \neq 0$. Thus ${\bf u_{i+1}} \< {\bf u_i}$.

Clearly, this recursive construction yields ${\bf u_{n+1}}=0$. Let
$k \in \NN$ be such that $\bf u_{k-1} \neq 0$ and $\bf u_{k}=0$ and let
$\la_{\bf x}({\bf v}) := 0$ for all ${\bf v} \in D^n \setminus \{{\bf u_1}, \dots, {\bf u_k}\}$.
The construction yields that the support of $\la_{\bf x}$ forms a chain in $D^n$ with respect to the order $\<$.
We will now prove that $\la_{\bf x} \in \PPPP({\bf x})$.

The choice of $k$ yields $\la_{\bf x}({\bf u_1}), \dots,\la_{\bf x}({\bf u_{k-1}})  \neq 0$.
Equation \eqref{eq:3} yields
\begin{equation}\label{eq:7}
\sum_{i=1}^{k-1}\la_{\bf x}({\bf u_i}){\bf u_i} ={\bf x}.
\end{equation}
Let $j \in [n]$ be such that $u_{k-1,j} \neq 0$. As ${\bf 0} \< {\bf u_{k-1}} \< \dots \< {\bf u_1}$, one has $u_{k-1,j} = \dots = u_{1j}$
and thus
$$\sum_{i=1}^{k-1}\la_{\bf x}({\bf u_i})u_{ij} = x_j$$
from  \eqref{eq:7} yields
$$\sum_{i=1}^{k-1}\la_{\bf x}({\bf u_i}) = \frac{x_j}{u_{1j}} \le 1.$$
If
$$\sum_{i=1}^{k-1}\la_{\bf x}({\bf u_i}) = 1,$$
then $\la_{\bf x}({\bf u_k}) = 0$ by definition and $\la_{\bf x}$ is supported by the chain $\{{\bf u_1}, \dots, {\bf u_{k-1}}\}$.
If
$$\sum_{i=1}^{k-1}\la_{\bf x}({\bf u_i}) < 1,$$
then $\la_{\bf x}({\bf u_k}) > 0$ by definition and
$\la_{\bf x}$ is supported by the chain $\{{\bf u_1}, \dots, {\bf u_k}\}$. One has

$$\sum_{{\bf a} \in D^n}\la_{\bf x}({\bf a}) = \sum_{i=1}^{k}\la_{\bf x}({\bf u_i}) = 1$$
 by definition and
$$\sum_{{\bf a} \in D^n}\la_{\bf x}({\bf a}) {\bf a} = \sum_{i=1}^{k}\la_{\bf x}({\bf u_i}){\bf u_i} = \sum_{i=1}^{k-1}\la_{\bf x}({\bf u_i}){\bf u_i}\stackrel{\eqref{eq:7}}{=} {\bf x},$$
so $\la_{\bf x} \in \PPPP({\bf x}).$\\
{\bf Uniqueness:}
Let $({\bf u_i})_{i \in \NN}$, $({\bf x_i})_{i \in \NN}$ and $\la_{\bf x}$ be as constructed above, let ${\bf v_1}\> \dots \> {\bf v_{\ell}}$ be a chain in $D^n$ and let $\mu \in \PPPP({\bf x})$ have support $\{{\bf v_1}, \dots, {\bf v_{\ell}}\}$. We will show that $\mu = \la_{\bf x}$. One has
\begin{equation}\label{eq:2}
\sum_{i=1}^{\ell}\mu({\bf v_i}){\bf v_i} = {\bf x}.
\end{equation}

Let $j \in [n]$.
As ${\bf v_1}\> \dots \> {\bf v_{\ell}}$, unless $v_{1j}= 0$, there is a $h \in [\ell]$ such that $v_{1j}= \dots = v_{hj} \neq 0$ and either $h = \ell$ or $ v_{hj}  \> v_{h+1,j} = \dots = v_{\ell j} = 0$. If $v_{1j}= 0$, Equation \eqref{eq:2} yields $x_{1j} = 0$ and thus $u_{1j} = 0$ by definition of $u_{1j}$. Otherwise, we have
\begin{equation}\label{eq:9}
v_{1j}\sum_{i=1}^{h}\mu({\bf v_i}) = \sum_{i=1}^{h}\mu({\bf v_i})v_{ij} = \sum_{i=1}^{\ell}\mu({\bf v_i})v_{ij} \stackrel{\eqref{eq:2}}{=} x_j.
\end{equation}
As $\sum\limits_{i=1}^{h}\mu({\bf v_i}) > 0$, the numbers $v_{1j}$, $u_{1j}$ and $x_j$ all have the same sign. Since $v_{1j}, u_{1j} \in \{-\al,0,1\}$, it must hold that $v_{1j} = u_{1j}$.
This yields ${\bf v_1} = {\bf u_1}$.

If $\ell=1$, we are done, as $\mu$ and $\la_{\bf x}$ both take the value $1$ on ${\bf v_1} = {\bf u_1}$ and $0$ otherwise, so $\mu = \la_{\bf x}$. If $\ell > 1$, let $m \in [\ell -1]$ be such that ${\bf v_h} = {\bf u_h}$ holds for all $h \le m$ and $\mu({\bf v_h}) = \la_{\bf x}({\bf u_h})$ holds for all $h < m$. We will show that $\mu({\bf v_m}) = \la_{\bf x}({\bf u_m})$ and ${\bf v_{m+1}} = {\bf u_{m+1}}$.

As ${\bf v_m} \> {\bf v_{m+1}}$ there is a $j \in [n]$ such that $v_{m+1,j}=0$ but $v_{mj}\neq 0$.

As ${\bf v_1}\> \dots \> {\bf v_{\ell}}$, one has $v_{1j}= \dots = v_{mj} \> v_{m+1,j} = \dots = v_{\ell j} = 0$, and thus
\begin{eqnarray*}
 \mu({\bf v_m})v_{mj}
 &=& \sum_{i=1}^{m}\mu({\bf v_i})v_{ij} - \sum_{i=1}^{m-1}\mu({\bf v_i})v_{ij}\\
 &=& \sum_{i=1}^{\ell}\mu({\bf v_i})v_{ij} - \sum_{i=1}^{m-1}\la_{\bf x}({\bf u_i})u_{ij}\\
 &\stackrel{\eqref{eq:2}, \eqref{eq:3}}{=}&  x_j - (x_j - x_{mj})\\
 &=& x_{mj}
\end{eqnarray*}
So if $v_{mj} = 1$ we must have $\mu({\bf v_m}) = x_{mj}$ and if $v_{mj} = -\al$ we must have $\mu({\bf v_m}) = -\frac{x_{mj}}{\al}$.

If $\mu({\bf v_m}) \neq \min \left\{ \min \menge{-\frac{x_{mj}}{\al}}{j \in N_i}  ,  \min \menge{x_{mj}}{j \in P_i} \right\} = \la_{\bf x}({\bf u_m})$ we get a contradiction to \eqref{eq:2} as then $\mu({\bf v_m}) > \la_{\bf x}({\bf u_m})$, and so, for $j' \in [n]$ such that $u_{(m+1)j'}=0$ but $u_{mj'}\neq 0$ we get the following.
As ${\bf u_1}\> \dots \> {\bf u_{k}}$, one has $u_{1j'}= \dots = u_{mj'} \> u_{m+1,j'} = \dots = u_{kj'} = 0$.

If $u_{mj'} = 1$, then $v_{1j'}= \dots = v_{mj'} = u_{1j'}= \dots = u_{mj'} = 1$ and $v_{(m+1)j'}, \dots , v_{\ell j'} \in \{0, 1\}$, and so we have
\begin{align*}
\sum_{i=1}^{\ell}\mu({\bf v_i}) v_{ij'} &\geq \sum_{i=1}^{m}\mu({\bf v_i}) v_{ij'} = \sum_{i=1}^{m}\mu({\bf v_i})\\    &> \sum_{i=1}^{m} \la_{\bf x}({\bf u_i}) = \sum_{i=1}^{m} \la_{\bf x}({\bf u_i}) u_{ij'}= \sum_{i=1}^{k} \la_{\bf x}({\bf u_i}) u_{ij'} = x_{j'},
\end{align*}
contradiction to \eqref{eq:2}.

Equally, if $u_{mj'} = -\al$, we have $v_{1j'}= \dots = v_{mj'} = u_{1j'}= \dots = u_{mj'} = -\al$ and $v_{(m+1)j'}, \dots , v_{\ell j'} \in \{0, -\al\}$, and so
\begin{align*}
\sum_{i=1}^{\ell}\mu({\bf v_i}) v_{ij'} &\leq \sum_{i=1}^{m}\mu({\bf v_i}) v_{ij'} = -\al\sum_{i=1}^{m}\mu({\bf v_i})\\    &<  -\al\sum_{i=1}^{m} \la_{\bf x}({\bf u_i}) = \sum_{i=1}^{m} \la_{\bf x}({\bf u_i}) u_{ij'}= \sum_{i=1}^{k} \la_{\bf x}({\bf u_i}) u_{ij'} = x_{j'},
\end{align*}
contradiction to \eqref{eq:2}.
We thus have $\mu({\bf v_m}) = \la_{\bf x}({\bf u_m})$. The fact that ${\bf v_h} = {\bf u_h}$ and $\mu({\bf v_h}) = \la_{\bf x}({\bf u_h})$ holds for all $h \le m$ implies ${\bf v_{m+1}} = {\bf u_{m+1}}$ by a similar argument as used to show ${\bf v_1} = {\bf u_1}$ in \eqref{eq:9}. This finishes the inductive proof that ${\bf v_h} = {\bf u_h}$ for all $h \in [\ell]$ and that $\mu = \la_{\bf x}$.
\end{proof}

\subsection{Convex Closure}
As, for every ${\bf x} \in [-\al,1]^n$, the set $\PPPP({\bf x})$ is a compact and non-empty subset of $\RR^{D^n}$, the set
$$\menge{\sum_{{\bf a} \in D^n}\la({\bf a})f({\bf a})}{\la \in \PPPP({\bf x})}$$
is a compact and non-empty subset of $\RR$, and so contains its infimum.

\begin{definition}[Convex Closure]
For a function $f:D^n\rightarrow \RR$ we define the {\em convex closure}
$f^-:[-\al,1]^n\rightarrow \RR$ by
$$f^-({\bf x}) := \min \menge{\sum_{{\bf a} \in D^n}\la({\bf a})f({\bf a})}{\la \in \PPPP({\bf x})}.$$
\end{definition}
\begin{proposition}\label{prop:convex}
 $f^-$ is convex.
\end{proposition}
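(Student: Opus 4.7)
The plan is to prove convexity directly from the definition by exhibiting, for any convex combination of two points, a feasible probability distribution on $D^n$ whose expected $f$-value is the corresponding convex combination of $f^-$-values. This is the standard argument that the minimum-convex-combination extension of any function is convex; no property of $\alpha$-bisubmodularity is needed here.

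More concretely, I would fix $\mathbf{x}, \mathbf{y} \in [-\al,1]^n$ and $t \in [0,1]$, and set $\mathbf{z} := t\mathbf{x} + (1-t)\mathbf{y}$. By the compactness-and-continuity observation made just before the definition of $f^-$, the infima defining $f^-(\mathbf{x})$ and $f^-(\mathbf{y})$ are attained, so I can pick $\la \in \PPPP(\mathbf{x})$ and $\mu \in \PPPP(\mathbf{y})$ with $f^-(\mathbf{x}) = \sum_{\mathbf{a} \in D^n} \la(\mathbf{a}) f(\mathbf{a})$ and $f^-(\mathbf{y}) = \sum_{\mathbf{a} \in D^n} \mu(\mathbf{a}) f(\mathbf{a})$.

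Next I would define $\nu := t\la + (1-t)\mu \in \RR^{D^n}$ and verify that $\nu \in \PPPP(\mathbf{z})$. Non-negativity is immediate since $\la, \mu \geq 0$ and $t, 1-t \geq 0$; the coefficients sum to $t \cdot 1 + (1-t) \cdot 1 = 1$; and the marginal condition follows from
$$\sum_{\mathbf{a} \in D^n} \nu(\mathbf{a}) \mathbf{a} = t \sum_{\mathbf{a} \in D^n} \la(\mathbf{a}) \mathbf{a} + (1-t) \sum_{\mathbf{a} \in D^n} \mu(\mathbf{a}) \mathbf{a} = t \mathbf{x} + (1-t) \mathbf{y} = \mathbf{z}.$$
Using $\nu$ as a feasible point in the minimisation defining $f^-(\mathbf{z})$ then yields
$$f^-(\mathbf{z}) \le \sum_{\mathbf{a} \in D^n} \nu(\mathbf{a}) f(\mathbf{a}) = t \sum_{\mathbf{a} \in D^n} \la(\mathbf{a}) f(\mathbf{a}) + (1-t) \sum_{\mathbf{a} \in D^n} \mu(\mathbf{a}) f(\mathbf{a}) = t f^-(\mathbf{x}) + (1-t) f^-(\mathbf{y}),$$
which is the desired convexity inequality.

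There is essentially no obstacle here: the proof is a one-line averaging argument, and the only point worth stating carefully is that the infimum is attained so that optimal $\la$ and $\mu$ genuinely exist (otherwise one would have to repeat the argument with near-optimal distributions and take a limit, which is routine but clutters the write-up). The real content of the paper—the interaction with $\alpha$-bisubmodularity—enters later, when one shows that $f^L = f^-$ for $\alpha$-bisubmodular $f$, not at this step.
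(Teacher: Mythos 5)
Your proof is correct and is essentially the same argument as in the paper: choose minimising distributions $\mu \in \PPPP({\bf x})$ and $\nu \in \PPPP({\bf y})$ (the minima are attained by compactness), observe that their convex combination lies in $\PPPP$ of the convex combination of the points, and use it as a feasible distribution in the minimisation defining $f^-$. Your remark that $\alpha$-bisubmodularity plays no role at this step also matches the paper, where it only enters in proving $f^L = f^-$.
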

\begin{proof}
Let $\beta \in (0, 1)$ and ${\bf x}, {\bf y} \in [-\al,1]^n$.
Let $\mu \in \PPPP({\bf x})$ be such that
$$f^-({\bf x}) = \sum_{{\bf a} \in D^n}\mu({\bf a})f({\bf a})$$
and let $\nu \in \PPPP({\bf y})$ be such that
$$f^-({\bf y}) = \sum_{{\bf a} \in D^n}\nu({\bf a})f({\bf a}).$$
Then $\beta \mu + (1-\beta)\nu \in \PPPP(\beta{\bf x} + (1-\beta){\bf y})$, and so
\begin{eqnarray*}
f^-(\beta{\bf x} + (1-\beta){\bf y}) &=&  \min \menge{\sum_{{\bf a} \in D^n}\la({\bf a})f({\bf a})}{\la \in \PPPP(\beta{\bf x} + (1-\beta){\bf y})}\\
 &\leq& \sum_{{\bf a} \in D^n}(\beta \mu + (1-\beta)\nu)({\bf a})f({\bf a})\\
  &=& \beta \sum_{{\bf a} \in D^n} \mu ({\bf a})f({\bf a}) + (1-\beta) \sum_{{\bf a} \in D^n}\nu({\bf a})f({\bf a}) \\
  &=&  \beta f^-({\bf x}) + (1-\beta)f^-({\bf y}).
\end{eqnarray*}
\end{proof}

\subsection{Convexity of the Lov\'asz Extension}

The following lemma generalises the corresponding results for submodular and bisubmodular functions, see \cite{Lovasz} and \cite{Qi}.

\begin{lemma}\label{lem:main}
The Lov\'asz extension $f^L$ is convex if and only if $f$ is $\alpha$-bisubmodular.\end{lemma}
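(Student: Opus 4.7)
The lemma is an equivalence and the two directions require different arguments. The implication ``$f^L$ convex $\Rightarrow$ $f$ is $\al$-bisubmodular'' is obtained by evaluating convexity of $f^L$ at one carefully chosen midpoint. The converse is obtained by identifying $f^L$ with the convex closure $f^-$, which is convex by \propref{convex}; the key step is an uncrossing argument powered by \eqref{alpha}.

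\textbf{Forward direction.} Fix $\mathbf{a}, \mathbf{b} \in D^n$ and set $\mathbf{x} := \tfrac{1}{2}(\mathbf{a}+\mathbf{b}) \in [-\al, 1]^n$. Going through the six possibilities for $\{a_j, b_j\}$ coordinate by coordinate simultaneously shows that $\mathbf{a}\wedge_0\mathbf{b} \cle \mathbf{a}\vee_0\mathbf{b} \cle \mathbf{a}\vee_1\mathbf{b}$ is a chain in $D^n$ and that the identity
\[
\tfrac{1}{2}(\mathbf{a}\wedge_0\mathbf{b}) + \tfrac{\al}{2}(\mathbf{a}\vee_0\mathbf{b}) + \tfrac{1-\al}{2}(\mathbf{a}\vee_1\mathbf{b}) = \mathbf{x}
\]
holds. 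Therefore the distribution putting masses $\tfrac{1}{2}, \tfrac{\al}{2}, \tfrac{1-\al}{2}$ on these three vectors lies in $\PPPP(\mathbf{x})$, and by the uniqueness in \lemref{lambda} it must coincide with $\la_\mathbf{x}$. Hence $f^L(\mathbf{x}) = \tfrac{1}{2} f(\mathbf{a}\wedge_0\mathbf{b}) + \tfrac{\al}{2} f(\mathbf{a}\vee_0\mathbf{b}) + \tfrac{1-\al}{2} f(\mathbf{a}\vee_1\mathbf{b})$; combined with convexity $f^L(\mathbf{x}) \le \tfrac{1}{2}(f(\mathbf{a})+f(\mathbf{b}))$ and multiplied by $2$, this is precisely \eqref{alpha}.

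\textbf{Reverse direction.} Now assume $f$ is $\al$-bisubmodular. Since $\la_\mathbf{x} \in \PPPP(\mathbf{x})$ the bound $f^-(\mathbf{x}) \le f^L(\mathbf{x})$ is immediate, so the task is to show that $\la_\mathbf{x}$ attains the minimum defining $f^-(\mathbf{x})$. The core move is an \emph{uncrossing step}: given $\mu \in \PPPP(\mathbf{x})$ with incomparable elements $\mathbf{a}, \mathbf{b} \in \supp \mu$, set $\ve := \min(\mu(\mathbf{a}), \mu(\mathbf{b}))$ and form $\mu'$ by subtracting $\ve$ from each of $\mu(\mathbf{a}), \mu(\mathbf{b})$ and adding $\ve, \al\ve, (1-\al)\ve$ to $\mu(\mathbf{a}\wedge_0\mathbf{b}), \mu(\mathbf{a}\vee_0\mathbf{b}), \mu(\mathbf{a}\vee_1\mathbf{b})$ respectively. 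The identity above ensures $\mu' \in \PPPP(\mathbf{x})$, while \eqref{alpha} ensures $\sum_\mathbf{z} \mu'(\mathbf{z}) f(\mathbf{z}) \le \sum_\mathbf{z} \mu(\mathbf{z}) f(\mathbf{z})$.

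\textbf{Main obstacle.} The delicate part is ensuring iteration of the uncrossing step terminates at a chain-supported distribution---which \lemref{lambda} would then force to be $\la_\mathbf{x}$. Separable potentials such as $\sum_\mathbf{z} \mu(\mathbf{z})\sum_j z_j^2$ decrease strictly under an uncrossing step only when the incomparable pair has some coordinate with values $\{-\al, 1\}$, and so fail on pairs like $(1,0), (0,1)$. My plan is to bypass the search for a uniformly strictly decreasing potential and instead produce an LP-duality certificate: determine $c_0 \in \RR$ and $\mathbf{c} \in \RR^n$ from the chain $\mathbf{u}_1 \> \dots \> \mathbf{u}_k$ supporting $\la_\mathbf{x}$ via the equations $c_0 + \mathbf{c} \cdot \mathbf{u}_i = f(\mathbf{u}_i)$ (which recursively read off $\mathbf{c}$ from consecutive differences of the chain), and then verify $c_0 + \mathbf{c}\cdot\mathbf{z} \le f(\mathbf{z})$ for every $\mathbf{z} \in D^n$ by induction driven by \eqref{alpha}. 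Such a certificate immediately gives $\sum \mu f \ge c_0 + \mathbf{c}\cdot\mathbf{x} = \sum \la_\mathbf{x} f$ for every $\mu \in \PPPP(\mathbf{x})$, establishing $f^L = f^-$ and hence convexity of $f^L$.
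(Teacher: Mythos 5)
Your forward direction is exactly the paper's argument and is fine: the identity $\tfrac12(\mathbf{a}\wedge_0\mathbf{b})+\tfrac{\al}{2}(\mathbf{a}\vee_0\mathbf{b})+\tfrac{1-\al}{2}(\mathbf{a}\vee_1\mathbf{b})=\tfrac{\mathbf{a}+\mathbf{b}}{2}$, the chain property, uniqueness from \lemref{lambda}, and convexity at the midpoint together give \eqref{alpha}.

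The reverse direction, however, has a genuine gap: the one step that carries all the difficulty is left as a plan. You correctly set up $f^-\le f^L$ and the uncrossing move, but you never establish that the minimum in the definition of $f^-(\mathbf{x})$ is attained by a chain-supported distribution. Your proposed substitute -- a dual certificate $c_0+\mathbf{c}\cdot\mathbf{z}\le f(\mathbf{z})$ for all $\mathbf{z}\in D^n$ with equality on the chain supporting $\la_{\mathbf{x}}$ -- would indeed finish the proof if it existed, but its existence is precisely the hard content (it amounts to the validity of a greedy/min--max theorem for skew bisubmodular functions, as in \cite{Qi} and \cite{Fujishige13}); saying it can be verified ``by induction driven by \eqref{alpha}'' is not an argument, and the system $c_0+\mathbf{c}\cdot\mathbf{u}_i=f(\mathbf{u}_i)$ does not even determine $\mathbf{c}$, since the chain has at most $n+1$ elements and consecutive differences $\mathbf{u}_i-\mathbf{u}_{i+1}$ may change several coordinates at once, so only certain sums of the $c_j$ are pinned down and the inequality must be checked for a suitable completion of $\mathbf{c}$.

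Moreover, the ``main obstacle'' you cite is based on a misdiagnosis: you only rule out \emph{separable} potentials such as $\sum_{\mathbf{z}}\mu(\mathbf{z})\sum_j z_j^2$. The paper instead takes a minimiser $\mu$ of $\sum_{\mathbf{a}}\la(\mathbf{a})f(\mathbf{a})$ that maximises the non-separable potential $\sum_{\mathbf{a}}\mu(\mathbf{a})\,z(\mathbf{a})^2$, where $z(\mathbf{a})$ is the \emph{number} of zero coordinates of $\mathbf{a}$. Writing $A,B,C,N$ for the relevant coordinate counts, one computes
\begin{equation*}
z(\mathbf{a}\wedge_0\mathbf{b})^2+\al\,z(\mathbf{a}\vee_0\mathbf{b})^2+(1-\al)\,z(\mathbf{a}\vee_1\mathbf{b})^2-z(\mathbf{a})^2-z(\mathbf{b})^2
=2\bigl(AB+AN+BN+(1+\al)CN\bigr)+(1+\al)N^2,
\end{equation*}
which is strictly positive for any incomparable pair, because incomparability forces $N>0$ or $A,B>0$; in particular the cross term $2AB$ handles exactly your troublesome pairs like $(1,0),(0,1)$. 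With this potential the uncrossing step (stated carefully, including the degenerate cases where some of the five points coincide and with $r$ capped so that $\nu$ stays in $[0,1]$) strictly increases the potential while staying among the minimisers, contradicting the extremal choice of $\mu$; hence $\mu$ is chain-supported, $\mu=\la_{\mathbf{x}}$ by \lemref{lambda}, and $f^L=f^-$. So either supply this extremal/potential argument, or actually prove the dual certificate exists; as written, the reverse direction is not proved.
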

\begin{proof}
Let $\mathbf{a},\mathbf{b} \in D^n$. If $f^L$ is convex, it holds that
\begin{equation}\label{eq:5}
 f^L\left( \tfrac{\mathbf{a}+\mathbf{b}}{2} \right) \le  \tfrac{f^L(\mathbf{a}) + f^L(\mathbf{b})}{2} = \tfrac{f(\mathbf{a}) + f(\mathbf{b})}{2}.
\end{equation}
It is easy to check that
\begin{equation}\label{eq:8}
({\bf a}\wedge_0 {\bf b})+\alpha({\bf a}\vee_0 {\bf b}) + (1-\alpha)({\bf a}\vee_1 {\bf b}) = \mathbf{a} + \mathbf{b},
\end{equation}
and so the probability distribution $\la$ with $\la({\bf a}\wedge_0 {\bf b}) = \tfrac{1}{2}$, $\la({\bf a}\vee_0 {\bf b}) = \tfrac{\alpha}{2}$ and $\la({\bf a}\vee_1 {\bf b}) = \tfrac{(1-\alpha)}{2}$ is in $\PPPP( \tfrac{\mathbf{a} + \mathbf{b}}{2})$. Furthermore, we have
$${\bf a}\wedge_0 {\bf b} \cle {\bf a}\vee_0 {\bf b} \cle {\bf a}\vee_1 {\bf b},$$
which means that $\la = \la_{\frac{\mathbf{a}+\mathbf{b}}{2}}$ and thus the value of the the Lov\'asz Extension at $\frac{\mathbf{a}+\mathbf{b}}{2}$ is
\begin{equation}\label{eq:6}
f^L\left( \tfrac{\mathbf{a}+\mathbf{b}}{2} \right) =  \tfrac{1}{2}f({\bf a}\wedge_0 {\bf b})+\tfrac{\alpha}{2}f({\bf a}\vee_0 {\bf b}) + \tfrac{(1-\alpha)}{2}f({\bf a}\vee_1 {\bf b}).
\end{equation}
\eqrefs{5}{6} imply \eqref{alpha}, so $f$ is $\alpha$-bisubmodular.

On the other hand, let $f$ be $\alpha$-bisubmodular. We will show $f^L = f^-$, as then $f^L$ is convex by \propref{convex}.

Let ${\bf x} \in [-\al,1]^n$. We will show $f^L({\bf x}) = f^-({\bf x})$.

Let
$$\MMMM({\bf x}) := \menge{\la \in \PPPP({\bf x})}{\sum_{{\bf a} \in D^n}\la({\bf a})f({\bf a}) = f^-({\bf x})}.$$
For every ${\bf a} = (a_1, \dots, a_n) \in D^n$ denote $z({\bf a}) := \left|\menge{i \in [n]}{a_i = 0}\right|.$
As $\MMMM({\bf x})$ is a compact and non-empty subset of $\RR^{D^n}$, the set
$$\menge{\sum_{{\bf a} \in D^n}\la({\bf a})z^2({\bf a})}{\la \in \MMMM({\bf x})}$$
is a compact and non-empty subset of $\RR$ and contains its supremum. Let $\mu \in \MMMM({\bf x})$ be such that
$$\sum_{{\bf a} \in D^n}\mu({\bf a})z^2({\bf a}) = \max\menge{\sum_{{\bf a} \in D^n}\la({\bf a})z^2({\bf a})}{\la \in \MMMM({\bf x})}.$$

To show $f^L({\bf x}) = f^-({\bf x})$, it is left to show that $\mu = \la_{\bf x}$. By \lemref{lambda} it suffices to show that $\mu$ is supported by a chain.

Assume that $\supp(\mu)$ is not a chain, and let ${\bf a}, {\bf b} \in \supp(\mu)$ be incomparable. We will define a function $\nu$ to contradict the choice of $\mu$. As $f$ is $\alpha$-bisubmodular, we have
\begin{equation}\label{eq:1}
 f({\bf a}\wedge_0 {\bf b})+\alpha\cdot f({\bf a}\vee_0 {\bf b}) + (1-\alpha)\cdot f({\bf a}\vee_1 {\bf b}) \leq f({\bf a})+f({\bf b}).
\end{equation}
Let $r:= \min\left\{\mu({\bf a}),\ \mu({\bf b}),\ \frac{1 - \mu({{\bf a}\wedge_0 {\bf b}})}{1+\al},\ 1 - \mu({\bf a}\vee_0 {\bf b}),\ 1 - \mu({\bf a}\vee_1 {\bf b})\right\}$.
Then $r>0$ by the choice of ${\bf a}$ and ${\bf b}$.

Define the function $\nu$ on $D^n$ as follows.
Case (i):
If all ${\bf a},{\bf b},{{\bf a}\wedge_0 {\bf b}},{\bf a}\vee_0 {\bf b}$ and ${\bf a}\vee_1 {\bf b}$ are distinct, define
\begin{eqnarray}
\nu({\bf a}) &:=& \mu({\bf a}) - r,\nonumber\\
\nu({\bf b}) &:=& \mu({\bf b}) - r,\nonumber\\
\nu({{\bf a}\wedge_0 {\bf b}}) &:=& \mu({{\bf a}\wedge_0 {\bf b}}) + r,\nonumber\\
\nu({\bf a}\vee_0 {\bf b}) &:=& \mu({\bf a}\vee_0 {\bf b}) + r\cdot \al,\label{nu}\\
\nu({\bf a}\vee_1 {\bf b}) &:=& \mu({\bf a}\vee_1 {\bf b}) + r\cdot (1-\alpha),\nonumber\\
\mbox{and}\ \ \nu({\bf c}) &:=& \mu({\bf c})\ \ \mbox{otherwise}\nonumber.
\end{eqnarray}
If any of the five elements ${\bf a},{\bf b},{{\bf a}\wedge_0 {\bf b}},{\bf a}\vee_0 {\bf b}$ and ${\bf a}\vee_1 {\bf b}$ coincide, we have to add the corresponding adjustments as follows. Firstly note that, as ${\bf a}$ and ${\bf b}$ are incomparable, it is easy to see that at most one pair of two elements can coincide, and that there are only the following four possibilities for these two coinciding elements:
(ii) ${\bf a}\wedge_0 {\bf b} = {\bf a}\vee_0 {\bf b}$,
(iii) ${\bf a}\vee_0 {\bf b} = {\bf a}\vee_1 {\bf b}$,
(iv) ${\bf a}\vee_1 {\bf b} = {\bf a}$ and
(v) ${\bf a}\vee_1 {\bf b} =  {\bf b}$.

In case (ii), we define $\nu({{\bf a}\wedge_0 {\bf b}}) := \mu({{\bf a}\wedge_0 {\bf b}}) + r\cdot (1+\alpha)$ and all other function values as in \eqref{nu},
in case (iii), we define $\nu({{\bf a}\vee_0 {\bf b}}) := \mu({{\bf a}\vee_0 {\bf b}}) + r$ and all other function values as in \eqref{nu},
and in cases (iv) and (v), we define $\nu({{\bf a}\vee_1 {\bf b}}) := \mu({{\bf a}\vee_1 {\bf b}}) - r\cdot \al$ and all other function values as in \eqref{nu}.

The image of $\nu$ is in $[0,1]$ by the choice of $r$, and it is easy to check that in all five cases one has
$$\sum_{{\bf c} \in \{{\bf a},{\bf b},{{\bf a}\wedge_0 {\bf b}},{\bf a}\vee_0 {\bf b},{\bf a}\vee_1 {\bf b}\} }\nu({\bf c})
= \sum_{{\bf c} \in \{{\bf a},{\bf b},{{\bf a}\wedge_0 {\bf b}},{\bf a}\vee_0 {\bf b},{\bf a}\vee_1 {\bf b}\} }\mu({\bf c}).$$
This yields
$$\sum_{{\bf c} \in D^n}\nu({\bf c}) = \sum_{{\bf c} \in D^n}\mu({\bf c}) = 1,$$
so $\nu$ is a probability distribution. Furthermore, an easy calculation using Equation \eqref{eq:8} yields
$$\sum_{{\bf c} \in \{{\bf a},{\bf b},{{\bf a}\wedge_0 {\bf b}},{\bf a}\vee_0 {\bf b},{\bf a}\vee_1 {\bf b}\} }\nu({\bf c}){\bf c}
= \sum_{{\bf c} \in \{{\bf a},{\bf b},{{\bf a}\wedge_0 {\bf b}},{\bf a}\vee_0 {\bf b},{\bf a}\vee_1 {\bf b}\} }\mu({\bf c}){\bf c}$$
in all five cases, and so
$$\sum_{{\bf c} \in D^n}\nu({\bf c}){\bf c} = \sum_{{\bf c} \in D^n}\mu({\bf c}){\bf c}  = {\bf x},$$
so $\nu \in \PPPP({\bf x})$. The $\alpha$-bisubmodularity inequality \eqref{eq:1} yields
\begin{align*}
&\sum_{{\bf c} \in \{{\bf a},{\bf b},{{\bf a}\wedge_0 {\bf b}},{\bf a}\vee_0 {\bf b},{\bf a}\vee_1 {\bf b}\} }\mu({\bf c})f({\bf c}) -
 \sum_{{\bf c} \in \{{\bf a},{\bf b},{{\bf a}\wedge_0 {\bf b}},{\bf a}\vee_0 {\bf b},{\bf a}\vee_1 {\bf b}\} }\nu({\bf c})f({\bf c})   &=  \\
&r\left(
f({\bf a}) +
f({\bf b}) -
f({\bf a}\wedge_0 {\bf b}) -
\al f({\bf a}\vee_0 {\bf b}) -
(1-\alpha)f({\bf a}\vee_1 {\bf b})
\right)
&\stackrel{\eqref{eq:1}}{\ge} 0
\end{align*}
and so
$$\sum_{{\bf c} \in D^n}\nu({\bf c})f({\bf c}) \le \sum_{{\bf c} \in D^n}\mu({\bf c})f({\bf c}),$$
so $\nu \in \MMMM({\bf x})$.
Finally, we will show that
\begin{equation}\label{eq:4}
\sum_{{\bf c} \in D^n}\nu({\bf c})z^2({\bf c}) > \sum_{{\bf c} \in D^n}\mu({\bf c})z^2({\bf c}),
\end{equation}
which is a contradiction to the choice of $\mu$.
Let
\begin{align*}
A &:= \left|\menge{i \in [n]}{a_i = 0,\  b_i \neq 0}\right|,\\
B &:= \left|\menge{i \in [n]}{ b_i = 0,\  a_i \neq 0}\right|,\\
C &:= \left|\menge{i \in [n]}{ a_i =  b_i = 0}\right|\ \ \mbox{ and}\\
N &:= \left|\menge{i \in [n]}{0 \neq a_i \neq  b_i \neq 0}\right|.
\end{align*}
 The incomparability of ${\bf a}$ and ${\bf b}$ implies that we have either $N>0$ or, if $N=0$, we have both $A>0$ and $B>0$.
It is easy to check that
\begin{eqnarray*}
z({\bf a}\wedge_0 {\bf b})&=&A+B+C+N,\\
z({\bf a}\vee_0 {\bf b})&=&C+N,\\
z({\bf a}\vee_1 {\bf b})&=&C,\\
z({\bf a})&=&A+C,\\
z({\bf b})&=&B+C,
\end{eqnarray*}
and so
\begin{eqnarray*}
&& z({\bf a}\wedge_0 {\bf b})^2+\alpha\cdot z({\bf a}\vee_0 {\bf b})^2 + (1-\alpha)\cdot z({\bf a}\vee_1 {\bf b})^2 - z({\bf a})^2 - z({\bf b})^2\\
&=& (A+B+C+N)^2+\alpha(C+N)^2 + (1-\alpha)C^2 - (A+C)^2 - (B+C)^2\\
&=& 2(AB + AN + BN + CN) + N^2 + 2\al CN+\alpha N^2\\
&=& 2(AB + AN + BN + (1+\al) CN) + (1 + \al)N^2\ >\ 0,
\end{eqnarray*}
as $N>0$ or $AB>0$. As $r>0$ this implies
$$r(z({\bf a}\wedge_0 {\bf b})^2+\alpha\cdot z({\bf a}\vee_0 {\bf b})^2 + (1-\alpha)\cdot z({\bf a}\vee_1 {\bf b})^2 - z({\bf a})^2 - z({\bf b})^2)>0.$$
An easy calculation yields
$$\sum_{{\bf c} \in \{{\bf a},{\bf b},{{\bf a}\wedge_0 {\bf b}},{\bf a}\vee_0 {\bf b},{\bf a}\vee_1 {\bf b}\}}\nu({\bf c})z^2({\bf c}) > \sum_{{\bf c} \in \{{\bf a},{\bf b},{{\bf a}\wedge_0 {\bf b}},{\bf a}\vee_0 {\bf b},{\bf a}\vee_1 {\bf b}\}}\mu({\bf c})z^2({\bf c})$$
in all five cases for the definition of $\nu$.

From this, the contradicting inequality \eqref{eq:4} follows.
So $\mu$ is supported by a chain, and this implies $\mu = \la_{\bf x}$, which means that $f^L({\bf x}) = f^-({\bf x})$.

Thus $f^L = f^-$ holds and $f^L$ is convex.
\end{proof}

\vspace{1cm}


\begin{thebibliography}{10}

\bibitem{Bouchet}
Bouchet, A.: Greedy algorithm and symmetric matroids.
\newblock Mathematical Programming \textbf{38}, 147--159, 1987

\bibitem{BouchetCunningham}
Bouchet, A. and Cunningham, W.H.: Delta-matroids, jump systems and bisubmodular
  polyhedra.
\newblock SIAM J. Discrete Math. \textbf{8}, 17--32, 1995

\bibitem{ChandrasekaranKabadi}
Chandrasekaran, R. and Kabadi, S.N.: Pseudomatroids.
\newblock Discrete Math. \textbf{71}, 205--217, 1988

 \bibitem{Cohen06:soft}
 Cohen, D., Cooper, M., Jeavons, P., and Krokhin, A.: The complexity of soft constraint satisfaction.
 \newblock {\em Artificial Intelligence}, 170(11):983--1016, 2006.

\bibitem{Edmonds}
Edmonds, J.: Submodular functions, matroids, and certain polyhedra.
\newblock In: R.~Guy, H.~Hanani, N.~Sauer, J.~Sch{\"o}nheim (eds.)
  Combinatorial Structures and Their Applications, pp. 69--87. Gordon and
  Breach, 1970

\bibitem{Fujishige:book}
Fujishige, S.: Submodular Functions and Optimization.
\newblock Elsevier, 2005.

\bibitem{Fujishige13}
Fujishige, S., Tanigawa, S., and Yoshida, Y.: Generalized skew bisubmodularity:
A characterization and a min-max theorem.
\newblock Technical Report RIMS-1781, Research Institute for Mathematical Sciences,
Kyoto University, 2013.

\bibitem{Groetschel}
Gr\"{o}tschel, M., Lov\'asz, L., and Schrijver, A.: The ellipsoid method and its consequences in combinatorial optimization.
\newblock Combinatorica, 1, 169-197, 1981.


\bibitem{ours:soda}
Huber, A., Krokhin, A., and Powell, R.: Skew {B}isubmodularity and {V}alued {C}{S}{P}s.
\newblock In {\em Proceedings of {SODA'13}}, pages 1296-1305, 2013.

\bibitem{KabadiChandrasekaran}
Kabadi, S.N. and Chandrasekaran, R.: On totally dual integral systems.
\newblock Discrete Appl. Math. \textbf{26}, 87--104, 1990

\bibitem{KrokhinLarose}
Krokhin, A. and Larose, B.: Maximizing supermodular functions on product lattices,
  with application to maximum constraint satisfaction.
\newblock SIAM Journal on Discrete Mathematics, 22(1), 312-328, 2008.

\bibitem{Kuivinen}
Kuivinen, F.: On the complexity of submodular function minimisation on
               diamonds.
\newblock Discrete Optimization, 8(3), 459-477, 2011.
 

\bibitem{Lovasz}
Lov\'asz, L.: Submodular functions and convexity.
\newblock In A.~Bachem, M.~Gr\"otschel, and B.~Korte, editors, {\em
  Mathematical Programming: The State of the Art}, pages 235--257. Springer,
  1983.

\bibitem{McCormick}
McCormick, S.T.: Submodular function minimization.
\newblock In: K.~Aardal, G.~Nemhauser, and R.~Weismantel, editors, {\it Handbook on Discrete Optimization},
pages 321--391. Elsevier, 2006.


\bibitem{Thapper12:power}
Thapper, J. and \v{Z}ivn\'y, S.: The power of linear programming for valued {CSP}s.
\newblock In {\em Proceedings of {FOCS'12}}, pages 669-678, 2012.

\bibitem{Qi}
Qi, L.: Directed submodularity, ditroids and directed submodular flows.
\newblock {\em Mathematical Programming}, 42(1--3):579--599, 1988.

\bibitem{Schrijver}
Schrijver, A.: Combinatorial Optimization: Polyhedra and Efficiency.
\newblock Springer, 2004

\end{thebibliography}
\end{document}